\definecolor{darkred}  {rgb}{0.5,0,0}
\definecolor{darkblue} {rgb}{0,0,0.5}
\definecolor{darkgreen}{rgb}{0,0.5,0}
\newcommand{\Thm}[1]{\hyperref[thm:#1]{Theorem~\ref*{thm:#1}}}
\newcommand{\Lem}[1]{\hyperref[lem:#1]{Lemma~\ref*{lem:#1}}}
\newcommand{\Cor}[1]{\hyperref[cor:#1]{Corollary~\ref*{cor:#1}}}
\newcommand{\Def}[1]{\hyperref[def:#1]{Definition~\ref*{def:#1}}}
\newcommand{\Sect}[1]{\hyperref[sec:#1]{Section~\ref*{sec:#1}}}
\newcommand{\Fig}[1]{\hyperref[fig:#1]{Figure~\ref*{fig:#1}}}
\newcommand{\EqRef}[1]{\hyperref[eq:#1]{(\ref*{eq:#1})}}
\newcommand{\Eq}[1]{Eq.~\hyperref[eq:#1]{(\ref*{eq:#1})}}
\newcommand{\N}{\mathbb{N}}
\newcommand{\C}{\mathbb{C}}
\newcommand{\ket}[1]{|#1\rangle}
\newcommand{\bra}[1]{\langle#1|}
\newcommand{\ketbra}[1]{|#1\rangle\langle#1|}
\DeclarePairedDelimiter{\set}{\lbrace}{\rbrace}
\DeclarePairedDelimiter{\abs}{\lvert}{\rvert}
\newcommand{\mc}[1]{\mathcal{#1}}
\newcommand{\ct}{^{\dagger}}
\newcommand{\x}{\otimes}
\newcommand{\ve}{\varepsilon}
\newcommand{\id}{I}
\newcommand{\bip} [2]{\C^{#1}\x\C^{#2}}    
\DeclareMathOperator{\tr}{Tr}
\DeclareMathOperator{\spec}{spec}
\DeclareMathOperator{\Pos}{Pos}
\DeclareMathOperator{\SEP}{SEP}
\DeclareMathOperator{\LOCC}{LOCC}
\DeclareMathOperator{\LOCCN}{LOCC_{\N}}
\newtheorem{theorem}{Theorem}
\newtheorem{lemma}{Lemma}
\theoremstyle{definition}
\newtheorem{definition}{Definition}
\newtheorem*{example*}{Example}
\title{Interpolatability distinguishes LOCC \\ from separable von Neumann measurements}
\author{Andrew M.\ Childs,$^{1,2}$ Debbie Leung,$^{1,2}$ Laura Man\v{c}inska,$^{1,2}$ and Maris Ozols$^{1,2,3}$
\\[2pt] \normalsize
${}^1$ Department of Combinatorics \& Optimization, University of Waterloo \\ \normalsize
${}^2$ Institute for Quantum Computing, University of Waterloo \\ \normalsize
${}^3$ IBM TJ Watson Research Center}
\begin{document}
\maketitle

\begin{abstract}
Local operations with classical communication (LOCC) and separable operations are two classes of quantum operations that play key roles in the study of quantum entanglement. Separable operations are strictly more powerful than LOCC, but no simple explanation of this phenomenon is known. We show that, in the case of von Neumann measurements, the ability to interpolate measurements is an operational principle that sets apart LOCC and separable operations.
\end{abstract}

\section{Introduction} 
\label{sec:IIntro}

LOCC is defined operationally as the set of all quantum operations that several separated parties can implement given access to unlimited local quantum information processing and classical communication between them. Unfortunately, the class lacks a succinct mathematical description and hence is often hard to work with. In contrast, the class of separable operations, which is easily seen to encompass all LOCC operations, has a succinct and easy-to-use mathematical description. However, unlike LOCC, the class of separable operations does not have a natural operational interpretation.

Despite known quantitative separations between the two classes~\cite{IBM, Childs2012,Koashi07,Koashi09,ChitambarSep1,ChitambarSep2,Trines}, it is not understood what determines whether a given separable operation can or cannot be implemented with LOCC. In this paper, we draw intuition from the proof in \cite{IBM} and answer the above question for separable von~Neumann measurements. In \cite{IBM}, the authors divided any LOCC measurement into two stages but did not distinguish between LOCC and separable measurements in any other way. Here, we show that the possibility to interpolate a measurement to obtain partial information is intrinsic to LOCC but not separable von~Neumann measurements. More precisely, a separable von~Neumann measurement can be interpolated only if it can be decomposed into two nontrivial steps, the first of which can be performed by a finite LOCC protocol. Therefore, the ability to interpolate is an operational principle that distinguishes LOCC from separable von~Neumann measurements.

Another operational distinction between LOCC and separable measurements is suggested by the work of \cite{Koashi07,Koashi09} in the context of unambiguous state discrimination. Their result relies on the fact that LOCC protocols alternate between actions of the two parties, whereas general separable operations need not have this form. However, it remains open whether this property characterizes the difference between LOCC and separable operations, even in the setting of unambiguous state discrimination.

This paper is organized as follows. In \Sect{IMeasurements} we discuss separable and LOCC measurements in the context of state discrimination. In \Sect{IInterpolation} we define interpolation, the central concept of this work, and discuss 
interpolatability of arbitrary measurements (see \Thm{IKKB}). Our main result regarding interpolatability of separable and LOCC measurements is presented in \Sect{IMain} (see \Thm{IMain}). We conclude in \Sect{IDiscussion}.

For simplicity, we present our results for the bipartite case. However, as the reader can easily verify, the same arguments hold for any number of parties.

\section{Separable and LOCC measurements} 
\label{sec:IMeasurements}

Let $\Pos(\C^n)$ be the set of all positive semidefinite operators acting on $\C^n$ and let $[k] := \set{1, \dotsc, k}$. We describe a $k$-outcome measurement $\mathcal{M}$ on an $n$-dimensional system using its \emph{POVM elements} $\set{E_i}_{i\in[k]}$, where $\sum_{i=1}^k E_i = \id$ and each $E_i\in\Pos(\C^n)$. We call the set $\set{E_i}_{i\in[k]}$ the POVM of $\mc{M}$. The probability to obtain the outcome $i \in [k]$ upon measuring a state $\rho$ is $\tr(E_i \rho)$. 
If the POVM elements of $\mathcal{M}$ are mutually orthogonal projectors, we say that $\mathcal{M}$ is a \emph{projective measurement}. If in addition, each POVM element is rank one, then we say that $\mathcal{M}$ is a \emph{von~Neumann measurement}.
Such measurements are in one-to-one correspondence with ordered orthonormal bases of $\C^n$ (up to a phase factor for each basis vector). Therefore, a von~Neumann measurement can be specified by the orthonormal basis it measures in. We say that a measurement $\mathcal{M}$ is \emph{trivial} if all its POVM elements are proportional to the identity matrix. We use $\mathcal{I}$ to denote the trivial measurement with exactly one POVM element, $\id$.

\begin{definition}[Coarse graining]\label{def:ICoarse}
Let $\mathcal{M}$ and $\smash{\widetilde{\mathcal{M}}}$ be two measurements with POVMs $\set{E_1, \dotsc, E_k}$ and $\set{F_1, \dotsc, F_m}$, respectively. 
We say that $\smash{\widetilde{\mathcal{M}}}$ is a \emph{coarse graining} of $\mathcal{M}$ if there exists a partition $(\Lambda_1, \dotsc, \Lambda_m)$ of $[k]$ such that $F_i = \sum_{j \in \Lambda_i} E_j$ for all $i \in [m]$.
\end{definition}

We call a measurement $\mathcal{M}$ on $\bip{d_A}{d_B}$ \emph{separable}, and write $\mathcal{M}\in \SEP$, if each POVM element has the form $E_i = \sum_j a_j \x b_j$ for some $a_j \in \Pos(\C^{d_A})$ and $b_j \in \Pos(\C^{d_B})$.
Note that any separable measurement is a coarse graining of some measurement with product POVM elements.

Any LOCC protocol $\mathcal{P}$ implements a quantum operation of the form
\begin{equation}
 \rho \mapsto \sum_{m \in \Lambda}
 \ketbra{m} \otimes (A_m \otimes B_m) \rho (A_m\ct \otimes B_m\ct),
\end{equation}
where $\Lambda$ is the set of all terminating classical measurement records and $A_m \otimes B_m$ is the Kraus operator corresponding to record $m$ (see Section~2.2.2 of \cite{Childs2012} for more details). We refer to the operators $(A_m\ct A_m) \otimes (B_m\ct B_m)$ as the POVM elements of the protocol $\mathcal{P}$. We say that $\mathcal{P}$ implements a measurement $\mathcal{M}$ with POVM $\set{E_i}_i$ if the set $\Lambda$ can be partitioned into parts $\Lambda_i$ such that
\begin{equation}
  E_i = \sum_{j\in\Lambda_i} (A_j\ct A_j^{}) \otimes( B_j\ct B_j^{}).
\label{eq:IProj}
\end{equation}
Operationally, the partition corresponds to classical post-processing after the execution of $\mathcal{P}$, coarse-graining all outcomes in $\Lambda_i$ for each $i$. When $\mc{M}$ is a von~Neumann measurement, each POVM element $E_i$ in \Eq{IProj} is rank one. Thus $(A_j\ct A_j) \otimes (B_j\ct B_j)$ is proportional to $E_i$ for all $j \in \Lambda_i$, and $E_i$ is necessarily a tensor product operator. Hence, if a von~Neumann measurement in a basis $S$ can be implemented with LOCC then $S$ consists only of tensor product vectors. We call such bases \emph{product bases}. Since non-orthogonal states can never be perfectly distinguished, to discriminate states from an orthogonal set $S$ with certainty one can only apply non-disturbing measurements, defined as follows.

\begin{definition}[Non-disturbing measurement]
We say that a measurement $\mathcal{M}$ is \emph{non-disturbing} for a set of orthogonal states $S$ if $\bra{\psi} E \ket{\phi}=0$ for all POVM elements $E$ of $\mc{M}$ and all distinct $\ket{\psi},\ket{\phi}\in S$.
\label{def:INon-disturbing}
\end{definition}

Similar to \cite{Eric12}, we think of LOCC as a class of quantum operations rather than a class of tasks or protocols. Given a measurement $\mathcal{M}$, we write $\mathcal{M}\in \LOCCN$ if there exists a \emph{finite} LOCC protocol that implements $\mathcal{M}$.

We show in the following lemma that the problem of implementing a measurement $\mathcal{M}$ in a basis $S$ with finite LOCC is equivalent to the problem of perfectly discriminating the states from $S$ with finite LOCC. Throughout the paper, we use the two perspectives interchangeably.

\begin{lemma}\label{lem:IEquivalence}
Let $\mathcal{M}$ be a von~Neumann measurement in a basis $S$. A finite LOCC protocol $\mathcal{P}$ implements measurement $\mathcal{M}$ if and only if $\mathcal{P}$ discriminates the states from $S$ with certainty.
\end{lemma}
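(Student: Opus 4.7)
The plan is to prove both directions separately. The forward direction ($\Rightarrow$) is immediate from the definitions: if $\mathcal{P}$ implements $\mathcal{M}$ via a partition $\{\Lambda_i\}$ of its terminating records, then for any input $\ket{\psi_i}\in S$ the probability of obtaining a record in $\Lambda_i$ equals $\bra{\psi_i}E_i\ket{\psi_i}=\abs{\braket{\psi_i}{\psi_i}}^2=1$. Applying the post-processing $m\mapsto i$ (for $m\in\Lambda_i$) therefore identifies the input state with certainty.

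For the reverse direction ($\Leftarrow$), label the basis $S=\{\ket{\psi_1},\dotsc,\ket{\psi_N}\}$ with $N=d_Ad_B$. A discrimination procedure supplies a classical post-processing $f\colon \Lambda\to[N]$ such that $f(m)=i$ with probability $1$ whenever the input is $\ket{\psi_i}$. This $f$ induces the candidate partition $\Lambda_i:=f^{-1}(i)$. I would then define
\[
F_i:=\sum_{j\in\Lambda_i}(A_j\ct A_j^{})\otimes(B_j\ct B_j^{})
\]
and aim to verify that $F_i=\ketbra{\psi_i}$ for every $i$, which by \Eq{IProj} exhibits $\mathcal{P}$ as an implementation of $\mathcal{M}$.

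The key linear-algebraic step is establishing $F_i=\ketbra{\psi_i}$. Certainty of discrimination gives $\bra{\psi_i}F_i\ket{\psi_i}=1$. Evaluating the completeness relation $\sum_iF_i=\id$ in the state $\ket{\psi_l}$ yields $\sum_i\bra{\psi_l}F_i\ket{\psi_l}=1$; since the $i=l$ term already equals $1$ and every term is nonnegative, we must have $\bra{\psi_l}F_i\ket{\psi_l}=0$ for all $l\neq i$. Positive semidefiniteness of $F_i$ upgrades this to $F_i\ket{\psi_l}=0$ whenever $l\neq i$, so the kernel of $F_i$ contains the $(N-1)$-dimensional span of $\{\ket{\psi_l}\}_{l\neq i}$. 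Hence $F_i$ has rank at most one and is supported on $\ket{\psi_i}$; combined with $\bra{\psi_i}F_i\ket{\psi_i}=1$, this forces $F_i=\ketbra{\psi_i}$, as required.

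I do not anticipate a real obstacle: once the partition is in place, the argument is elementary. The finiteness hypothesis on $\mathcal{P}$ is used only implicitly, to guarantee that the record set $\Lambda$ is well-defined and that the operator sums defining the $F_i$ in \Eq{IProj} make literal sense without invoking any limiting procedure.
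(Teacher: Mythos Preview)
Your proof is correct and follows essentially the same approach as the paper: certainty of discrimination forces the relevant POVM operators to vanish on all but one basis vector, and positive semidefiniteness together with completeness then pins them down as the rank-one projectors. The only cosmetic difference is that the paper argues at the level of each individual product POVM element $E\otimes F$ (showing each is proportional to some $\ketbra{\psi_i}$ and deriving the partition from this), whereas you take the partition from the post-processing map $f$ and verify directly that the coarse-grained operators $F_i$ equal $\ketbra{\psi_i}$.
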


\begin{proof}
Clearly, given an LOCC protocol $\mathcal{P}$ that implements the
measurement $\mathcal{M}$ in the basis $S$, we can use the measurement
outcome to discriminate the states from $S$ with certainty using LOCC.
Conversely, suppose $\mathcal{P}$ discriminates the states from $S$
with certainty using LOCC. Consider any nonzero POVM element $E \otimes F$ of $\mathcal{P}$. Since $\mathcal{P}$ discriminates the states from $S$ with certainty, $\bra{\psi}(E\otimes F)\ket{\psi}=0$ for all but one of the states $\ket{\psi}\in S$. This means that $E\otimes F$ is proportional to one of the projectors in $\mathcal{M}$. By applying the same argument to all nonzero POVM elements of $\mathcal{P}$, we can partition them according to the states on which they project. Since the POVM elements must sum to $I$, the resulting LOCC protocol implements $\mathcal{M}$.
\end{proof}

\section{Interpolation of measurements} 
\label{sec:IInterpolation}

In this section we consider the problem of implementing a von~Neumann measurement in two stages, \emph{i.e.}, as a sequence of two measurements followed by coarse graining. In addition, we want to control how much progress is made during the first stage.

\subsection{Progress function}

To quantify the progress of the first measurement, we introduce a function that assigns numerical values to POVM elements. We take its range to be $[0,\infty)$ (the set of non-negative real numbers). Each value indicates how much progress is made when a particular measurement outcome occurs; a larger value corresponds to more progress.

Any such progress function must satisfy some operationally-motivated properties.
First, it must be continuous.
Second, it must vanish on POVM elements that are non-informative (\emph{i.e.}, proportional to the identity matrix).
Third, as we want to measure the progress \emph{conditioned} on having obtained a particular outcome, the progress function must be \emph{scale-invariant} (\emph{i.e.}, it remains the same when the POVM element is multiplied by a positive scalar).
Fourth, since coarse graining corresponds to discarding classical information, the progress achieved by a coarse-grained operator $\sum_i E_i$ must not exceed that of the most informative $E_i$.  We call the last condition \emph{quasiconvexity}.

\begin{definition}[Progress function]\label{def:IProgressFunction}
A continuous function $\mu: \Pos(\C^n) \setminus \set{0} \rightarrow [0,\infty)$ such that $\mu(\id) = 0$, $\mu(tE) = \mu(E)$ for all $t > 0$, and $\mu(E + F) \leq \max\set{\mu(E),\mu(F)}$ for all $E,F\in \Pos(\C^n) \setminus \set{0}$ is called a \emph{progress function}.
\end{definition}

\subsection{Interpolation}

We are interested in measurements $\mathcal{M}$ whose outcome statistics can be reproduced by a two-stage process: first perform some measurement $\mathcal{M}_1$ and then, conditioned on the outcome $i$, perform some other measurement \smash{$\mathcal{M}_2^{(i)}$}. More formally:

\begin{definition}[Composition of measurements]
Let $\mathcal{M}_1$ be a measurement with POVM $\set{E_1, \dotsc, E_k}$ and let $\mathcal{M}_2$ 
be a measurement with POVM $\smash{\set[\big]{\ketbra{i} \x E^{(i)}_j}_{ij}}$ 
such that  for each $i$, the set $\set{E^{(i)}_j}_j$ is a POVM of some measurement 
$\mathcal{M}_2^{(i)}$.
We say that a measurement $\mathcal{M}$ is
a \emph{composition} of $\mathcal{M}_1$ and $\mathcal{M}_2$, and write
$\mathcal{M} = \mathcal{M}_2 \circ \mathcal{M}_1$, if $\mathcal{M}$ is
a coarse graining (see \Def{ICoarse}) of a measurement with POVM 
\begin{equation}
  \set[\big]{E_i^{\frac{1}{2}} E^{(i)}_j E_i^{\frac{1}{2}}}_{ij}.
\label{eq:IComp}
\end{equation}
As a shorthand, we denote the second measurement $\mathcal{M}_2 = \bigoplus_{i \in [k]} \mathcal{M}_2^{(i)}$.
\end{definition}

Note that due to coarse graining, the POVM elements in \Eq{IComp} that
sum to a POVM element $E$ of the measurement $\mathcal{M}$ need not all be
proportional to $E$.
In such a case $\mathcal{M}_2\circ\mathcal{M}_1$ does not reproduce the post-measurement state of $\mathcal{M}$ for the outcome corresponding to $E$. However, if $\mathcal{M}$ is a von~Neumann measurement, then each POVM element $E$ of $\mathcal{M}$ is rank one, so the POVM elements in \Eq{IComp} that correspond to $E$ must be proportional to $E$. Therefore, any $\mathcal{M}_2\circ\mathcal{M}_1$ that reproduces the measurement statistics of a von~Neumann measurement $\mathcal{M}$ also reproduces its post-measurement states.

A progress function together with the ability to compose measurements allows us to speak of measurement interpolation, a two-stage implementation of a measurement where the amount of progress achieved in the first stage can be controlled.  
In general, some measurement outcomes might be more informative than others.  
In an $\ve$-interpolation, the progress after the first measurement is
at most $\ve$ regardless of the outcome obtained. 

\begin{definition}[$\ve$-interpolation]
\label{def:IInterpol}
Let $\ve\geq0$. An \emph{$\ve$-interpolation} of a measurement $\mathcal{M}$ with respect to a progress function $\mu$ is a pair of measurements $\mathcal{M}_1$ (with POVM $\set{E_1,\dotsc,E_k}$) and $\mathcal{M}_2$ such that
\begin{itemize}
\item $\displaystyle \max_{i\in[k]} \mu(E_i) = \ve$ and
\item $\mathcal{M}=\mathcal{M}_2\circ \mathcal{M}_1$ where $\mathcal{M}_2 = \bigoplus_{i\in[k]}\mathcal{M}_2^{(i)}$ for some measurements $\mathcal{M}_2^{(i)}$.
\end{itemize}
\end{definition}

The following theorem from \cite{KKB} (whose idea originates in \cite{IBM}) shows that any measurement can be $\ve$-interpolated. Note that this theorem does not require the progress function to be quasiconvex.

\begin{theorem}[\cite{KKB}]\label{thm:IKKB}
Let $\mu$ be any progress function (see \Def{IProgressFunction}). Then any measurement $\mathcal{M}$ 
can be $\ve$-interpolated with respect to $\mu$ for any $\ve\in [0,\lambda]$, where $\lambda:=\max_i \mu(F_i)$ and 
$\set{F_1,\dotsc,F_k}$ is the POVM for $\mathcal{M}$.
\end{theorem}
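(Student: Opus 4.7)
The plan is to build $\mc{M}_1$ as a convex interpolation between the trivial measurement and $\mc{M}$, tune the interpolation parameter by an intermediate-value argument so that the first-stage progress equals $\ve$, and then exhibit a second-stage measurement whose composition with $\mc{M}_1$ reproduces $\mc{M}$. Concretely, for $t \in [0,1]$ I will use the $k$-outcome POVM $E_i(t) := t F_i + (1-t)\,\id/k$. This is a valid POVM because $\sum_i E_i(t) = \id$, and each $E_i(t)$ is nonzero on $[0,1]$ (using $F_i \neq 0$). At $t = 0$ each $E_i(0) = \id/k$ is proportional to $\id$, so scale-invariance together with $\mu(\id) = 0$ gives $\max_i \mu(E_i(0)) = 0$; at $t = 1$ we get $\max_i \mu(E_i(1)) = \lambda$. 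Continuity of $\mu$ (and the fact that the max of finitely many continuous functions is continuous) lets me invoke the intermediate value theorem to produce some $t^* \in [0,1]$ with $\max_i \mu(E_i(t^*)) = \ve$.

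For the second stage, the key algebraic observation is that $E_i(t) = \sum_j c_{ij}(t)\,F_j$, where $c_{ii}(t) = t + (1-t)/k$ and $c_{ij}(t) = (1-t)/k$ for $j \neq i$, and the resulting matrix $(c_{ij}(t))$ is doubly stochastic: $\sum_j c_{ij}(t) = \sum_i c_{ij}(t) = 1$. When $t^* < 1$, the $(1-t^*)\,\id/k$ contribution makes each $E_i(t^*)$ strictly positive definite, so $E_i(t^*)^{-1/2}$ is well defined, and I will set $\mc{M}_2^{(i)}$ to have POVM
\[
  E_j^{(i)} := c_{ij}(t^*)\, E_i(t^*)^{-1/2} F_j\, E_i(t^*)^{-1/2}.
\]
Positivity is immediate, and $\sum_j E_j^{(i)} = E_i(t^*)^{-1/2}\bigl(\sum_j c_{ij}(t^*)\,F_j\bigr)E_i(t^*)^{-1/2} = E_i(t^*)^{-1/2}E_i(t^*)E_i(t^*)^{-1/2} = \id$ by the row-stochasticity of $(c_{ij}(t^*))$, so each $\mc{M}_2^{(i)}$ is a valid POVM. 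A direct computation then gives the composed POVM elements $E_i(t^*)^{1/2} E_j^{(i)} E_i(t^*)^{1/2} = c_{ij}(t^*)\,F_j$, and coarse-graining over $i$ for fixed $j$ yields $\bigl(\sum_i c_{ij}(t^*)\bigr)F_j = F_j$ by column-stochasticity, so $\mc{M}_2 \circ \mc{M}_1 = \mc{M}$ as required.

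The only remaining case is $t^* = 1$, which can occur only when $\ve = \lambda$; there I simply take $\mc{M}_1 = \mc{M}$ and $\mc{M}_2^{(i)} = \mc{I}$ for every $i$, and the composed POVM is $\{F_i\}_i$ with $\max_i \mu(F_i) = \lambda = \ve$. I expect the main obstacle in this proof to be guessing the right parameterization: the form $E_i(t) = tF_i + (1-t)\,\id/k$ is essentially forced by three simultaneous demands, namely that the first stage be a valid POVM for every $t$, that the $\mu$-values sweep continuously between $0$ and $\lambda$, and that the induced coefficients be doubly stochastic so the second-stage POVM automatically sums to $\id$. Once this choice is in place, the rest reduces to the arithmetic above; note in particular that quasiconvexity of $\mu$ is never used, in keeping with the remark following the theorem statement.
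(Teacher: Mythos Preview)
Your proof is correct and follows the same construction as the paper: mix each $F_i$ with a multiple of the identity to form $\mc{M}_1$, then undo this in the second stage by conjugating the $F_j$ with $E_i^{-1/2}$ and coarse-graining. The only difference is that you use a single interpolation parameter $t$ (equivalent to taking all of the paper's $c_i$ equal), which suffices for the stated conclusion and streamlines the intermediate-value argument; the paper's $k$ independent parameters additionally achieve $\mu(E_i)=\min\{\ve,\mu(F_i)\}$ for every $i$, a slightly stronger outcome not needed here.
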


\begin{proof}
Let $c_1,\dotsc,c_k\geq0$ be constants and define $c:=(1+\sum_i c_i)^{-1}$.
Define POVM elements for $\mathcal{M}_1$ as 
\begin{equation}
  E_i := c \left(c_i \id + F_i\right)
\end{equation}
for all $i \in [k]$.
Let $\mc{M}_2 := \bigoplus_{i\in [k]}\mathcal{M}_2^{(i)}$, where $\mathcal{M}_2^{(i)}$ has POVM $\set{E^{(i)}_1,\dotsc,E^{(i)}_k}$ with 
\begin{align}
  E^{(i)}_j :=
    \begin{cases}
      \delta_{ij} \id 
      &\text{if $c_i=0$},\\
      c \left(c_i + \delta_{ij} \right) 
      E_i^{-\frac{1}{2}} 
      F_j
      E_i^{-\frac{1}{2}} 
      &\text{otherwise.}
 \label{eq:IEij}
 \end{cases}
\end{align}

First we check that the above definitions correspond to valid measurements.  
This is immediate for $\mathcal{M}_1$ and also for $\mathcal{M}_2^{(i)}$ when
$c_i=0$.  To see that each $\mathcal{M}_2^{(i)}$ is a valid measurement if
$c_i>0$, note that in this case the matrix $E_i$ has full rank and hence $E_i^{-\frac{1}{2}}$ is well-defined.  Furthermore, 
\begin{equation}
\sum_j E^{(i)}_j =  E_i^{-\frac{1}{2}} 
                    c\sum_j \left(c_i F_j + \delta_{ij} F_j\right)
                    E_i^{-\frac{1}{2}}
                 =  E_i^{-\frac{1}{2}} 
                    c\left(c_i \id + F_i\right)
                    E_i^{-\frac{1}{2}}
                 =  E_i^{-\frac{1}{2}} E_i E_i^{-\frac{1}{2}} = \id.
\end{equation}
We now show that the measurements $\mathcal{M}_1$ and $\mathcal{M}_2^{(i)}$ satisfy the two conditions of $\ve$-interpolation in \Def{IInterpol}.
First, note that $E_i$ changes continuously from $\tilde{F}_i := (1+\sum_{k \neq i} c_k)^{-1} F_i$ to $\id$ as $c_i$ changes from 0 to $\infty$. Since the progress function $\mu$ is continuous on nonzero operators, the parameter $c_i$ can be chosen so that $\mu(E_i)$ achieves any value between $\mu(\tilde{F}_i) = \mu(F_i)$ and $\mu(\id)=0$. Hence, for any $\ve\in[0,\lambda]$ we can choose $c_i$ so that $\mu(E_i) = \min\{\ve,\mu(F_i)\}$. Now recall that $\mu(t E_j)=\mu(E_j)$ for all $t>0$. Therefore, changing $c_i$ does not affect the value of $\mu(E_j)$ for $j \neq i$, so $\mu(E_i)$ can be adjusted independently for each $i$. Thus for any $\ve\in[0,\lambda]$ the parameters $c_i$ can be chosen so that $\max_i \mu(E_i)=\ve$.

Finally, to see that $\mathcal{M} = \bigl(\bigoplus_{i \in [k]} \mathcal{M}_2^{(i)}\bigr) \circ \mathcal{M}_1$, observe that the POVM elements of the right-hand side have the form
\begin{equation}
  E_i^{\frac{1}{2}} E^{(i)}_j E_i^{\frac{1}{2}}
  = c \left(c_i+\delta_{ij}\right) F_j.
\end{equation}
This expression holds even if $c_i=0$, since in that case $E_i^{\frac{1}{2}} E^{(i)}_j E_i^{\frac{1}{2}} = \delta_{ij} E_i = c \delta_{ij} F_j$. Since $\sum_i c \left(c_i+\delta_{ij}\right) F_j= F_j$, coarse graining over $i$ and labeling the measurement outcome by $j$ gives the desired measurement $\mathcal{M}$.
\end{proof}

\Thm{IKKB} states that any measurement can be $\ve$-interpolated for small enough $\ve$ when the type of measurement in the interpolation is unrestricted. In \Thm{IMain} we will see that this is not the case for interpolation with a restricted type of measurement.

\subsection{Interpolation in SEP}

When interpolating a separable measurement, it is natural to demand that both stages of the interpolation are also separable measurements.
Recall that if $\mc{M}_1$ is separable, then each POVM element $E$ for $\mc{M}_1$ must be of the form $E = \sum_j a_j \x b_j$. 
Here the coarse graining over index $j$ can be viewed as giving away the information about $j$ to the environment. We wish to measure the achieved progress by taking into account all extracted classical information, even if it is held by the environment. Therefore, when interpolating within $\SEP$ we modify \Def{IInterpol}.

\begin{definition}[$\ve$-interpolation in $\SEP$]
\label{def:ISEPInterpol}
Let $\mc{M}$ be a separable measurement.
We say that $\mc{M}$ can be \emph{$\ve$-interpolated in $\SEP$} for $\ve\geq 0$ if $\mc{M}$ has an $\ve$-interpolation $\big(\bigoplus_{i\in[k]}\mc{M}_2^{(i)}\big)\circ \mc{M}_1$ such that
\begin{itemize}
  \item the measurements $\mc{M}_1$ and $\mc{M}_2^{(i)}$ for all $i\in[k]$ are separable, and
  \item $\displaystyle \max_i \tilde{\mu}(E_i) = \ve$, where $\set{E_1,\dotsc,E_k}\subseteq \Pos(\bip{d_A}{d_B})$ is the POVM for $\mc{M}_1$ and $\tilde{\mu}$ is obtained by minimizing over all product decompositions:
  \begin{equation}
    \tilde{\mu}(E) := \min \set[\bigg]{\max_j \mu(a_j \x b_j) : E = \sum_j a_j \x b_j}
    \label{eq:ImuSEP}
  \end{equation}
  where $a_j \in \Pos(\C^{d_A})\setminus \set{0}$ and $b_j \in \Pos(\C^{d_B}) \setminus \set{0}$.
\end{itemize}
\end{definition}

Note that the minimum in the definition of $\tilde{\mu}$ is always achieved: $\mu$ is scale invariant and we can use Carath\'{e}odory's theorem to bound the number of terms in the sum $\sum_j a_j \x b_j$.

Consider the relationship between \Def{IInterpol} and \Def{ISEPInterpol}. Suppose we replace $a_j \x b_j$ with a general $F_j \in \Pos(\bip{d_A}{d_B})$ in \Eq{ImuSEP}. Then $E = F_1$ is a valid decomposition of $E$, so $\tilde{\mu}(E) \leq \mu(E)$. On the other hand, $\tilde{\mu}(E) \geq \mu(E)$ because $\mu(E) = \mu(\sum_j F_j) \leq \max_j \mu(F_j)$ by quasiconvexity of $\mu$. Therefore, the requirement
$\max_i \mu(E_i) = \ve$ in \Def{IInterpol} is equivalent to $\max_i
\tilde{\mu}(E_i) = \ve$ in \Def{ISEPInterpol} (without the product
constraint).

\subsection{Product interpolation}

We now define product interpolation, a notion that facilitates our derivations in the next section.

\begin{definition}[Product $\ve$-interpolation]
Let $\mc{M}_2 \circ \mc{M}_1$ be an $\ve$-interpolation of a measurement $\mc{M}$.
We say that $\mc{M}_2 \circ \mc{M}_1$ is a \emph{product $\ve$-interpolation} of $\mc{M}$ if all POVM elements of $\mc{M}_1$ have tensor product form.
\label{def:IProdInterpol}
\end{definition}

The following two simple lemmas show that product interpolation and interpolation in $\SEP$ are closely related. These lemmas are crucial for proving \Lem{IInterpol} and \Thm{IMain}, respectively.

\begin{lemma}
If $\mc{M}$ can be $\ve$-interpolated in $\SEP$ then $\mc{M}$ has a product $\ve$-interpolation that is also an interpolation in $\SEP$.
\label{lem:ISepToProd}
\end{lemma}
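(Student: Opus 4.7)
The plan is to refine the first stage of the given $\SEP$ interpolation by decomposing each of its POVM elements into a sum of products, and to keep the second-stage measurements essentially unchanged. Given an $\ve$-interpolation $\bigl(\bigoplus_i \mc{M}_2^{(i)}\bigr) \circ \mc{M}_1 = \mc{M}$ in $\SEP$ with separable $\mc{M}_1$ having POVM $\{E_i\}$, separable $\mc{M}_2^{(i)}$ with POVM $\{Q_j^{(i)}\}$, and $\max_i \tilde{\mu}(E_i) = \ve$, for each $i$ I choose a product decomposition $E_i = \sum_r a_{ir}\x b_{ir}$ attaining the minimum in \EqRef{ImuSEP}, so that $\max_r \mu(a_{ir}\x b_{ir}) = \tilde{\mu}(E_i)$. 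Such a minimizer exists because $\mu$ is scale-invariant and, by Carath\'eodory's theorem, the number of product terms in the decomposition can be bounded.

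I then take $\mc{M}_1'$ to be the measurement with POVM $\{F_{ir} := a_{ir}\x b_{ir}\}_{i,r}$. This is a valid POVM since $\sum_{i,r} F_{ir} = \sum_i E_i = I$, and it manifestly has product POVM elements. Its maximal progress is $\max_{i,r} \mu(F_{ir}) = \max_i \tilde{\mu}(E_i) = \ve$, and because $\mu$ and $\tilde{\mu}$ agree on product operators (by quasiconvexity of $\mu$, as noted in the paragraph after \Def{ISEPInterpol}), we also have $\max_{i,r} \tilde{\mu}(F_{ir}) = \ve$. Hence the new first stage has the correct progress both as a product interpolation and as a $\SEP$ interpolation. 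For the second stage I set $\mc{M}_2'^{(i,r)} := \mc{M}_2^{(i)}$, which is separable by hypothesis; the resulting pre-coarse-grain POVM elements $(a_{ir}^{1/2}\x b_{ir}^{1/2})\, Q_j^{(i)}\,(a_{ir}^{1/2}\x b_{ir}^{1/2})$ are then separable as well, being a product-conjugation of a separable operator.

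The main obstacle is verifying that these refined POVM elements coarse-grain (using the partition inherited from the original interpolation) to the POVM $\{P_l\}$ of $\mc{M}$. The pointwise identity $\sum_r F_{ir}^{1/2} Q_j^{(i)} F_{ir}^{1/2} = E_i^{1/2} Q_j^{(i)} E_i^{1/2}$ fails in general, so one cannot match the two pre-coarse-grain POVMs term by term. I expect the resolution to come from the separability of $\mc{M}$, which constrains how the outcomes of each $\mc{M}_2^{(i)}$ can be grouped across distinct $P_l$: since each coarse-grained sum $\sum_{(i,j)\in S_l} E_i^{1/2} Q_j^{(i)} E_i^{1/2}$ must be separable, the non-product discrepancies between the two compositions should cancel at the coarse-grained level and yield $\sum_{(i,j)\in S_l} \sum_r F_{ir}^{1/2} Q_j^{(i)} F_{ir}^{1/2} = P_l$. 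If this cancellation fails in some subtle case, the naive choice $\mc{M}_2'^{(i,r)} := \mc{M}_2^{(i)}$ must be replaced by an $r$-dependent separable measurement tailored so that the pre-coarse-grain POVMs match directly; the delicacy of this adjustment is where the hardest part of the proof will lie.
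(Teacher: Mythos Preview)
Your first two paragraphs are exactly the paper's argument: the paper's entire proof is the single sentence that each separable first-stage element $E_i=\sum_j a_j\x b_j$ ``can be replaced with its fine-grained product operators $a_j\x b_j$ achieving the minimum in the definition of $\tilde\mu$.'' The paper says nothing whatsoever about how the second stage is to be adjusted; it treats the replacement as self-evidently producing an interpolation.

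The obstacle you flag in your third paragraph is real, and neither your speculative ``cancellation'' nor the paper's one-liner resolves it. With the square-root convention of \Eq{IComp}, setting $\mc{M}_2'^{(i,r)}:=\mc{M}_2^{(i)}$ gives pre-coarse-grain operators $F_{ir}^{1/2}Q_j^{(i)}F_{ir}^{1/2}$, and in general $\sum_r F_{ir}^{1/2}Q_j^{(i)}F_{ir}^{1/2}\neq E_i^{1/2}Q_j^{(i)}E_i^{1/2}$. For instance, take $E=\tfrac12\bigl(\ketbra{0}\x I+I\x\ketbra{0}\bigr)$ with product pieces $F_1=\tfrac12\ketbra{0}\x I$ and $F_2=\tfrac12 I\x\ketbra{0}$, and $Q=\ketbra{+}\x\ketbra{+}$: then $\bra{01}E^{1/2}QE^{1/2}\ket{10}=\tfrac18$ while $\bra{01}\bigl(F_1^{1/2}QF_1^{1/2}+F_2^{1/2}QF_2^{1/2}\bigr)\ket{10}=0$. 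Your hope that separability of the $P_l$ forces these discrepancies to cancel after coarse-graining has no mechanism behind it: separability constrains each $P_l$ as an operator, not how sub-operators with the inherited labels assemble into it. So at the level of \Def{IInterpol} and \Eq{IComp} your write-up (like the paper's) does not yet exhibit a second-stage measurement $\mc{M}_2'$ with $\mc{M}=\mc{M}_2'\circ\mc{M}_1'$; you are right that the naive choice fails, and the ``$r$-dependent separable $\mc{M}_2'^{(i,r)}$'' you gesture at would need to be constructed explicitly rather than asserted.
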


\begin{proof}
If $\mc{M}$ can be $\ve$-interpolated in $\SEP$, then the first-stage measurement $\mc{M}_1$ can be chosen to have tensor product POVM elements. This is because any POVM element $E = \sum_j a_j \x b_j$ of the first-stage measurement can be replaced with its fine-grained product operators $a_j \x b_j$ achieving the minimum in the definition of $\tilde{\mu}$.
\end{proof}

\begin{lemma}
Let $\mc{M}_2\circ\mc{M}_1$ be a product $\ve$-interpolation of $\mc{M}$ with $\mc{M}_2 \in \SEP$.
Then $\mc{M}_2\circ\mc{M}_1$ is also an $\ve$-interpolation of $\mc{M}$ in $\SEP$.
\label{lem:IProdToSep}
\end{lemma}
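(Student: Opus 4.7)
The plan is to verify directly the two bullet points of \Def{ISEPInterpol} for the given $\mc{M}_2\circ\mc{M}_1$. First I would dispatch the separability requirements. Since every POVM element of $\mc{M}_1$ is a single tensor product by the product interpolation hypothesis, $\mc{M}_1$ is trivially separable (it is a one-term sum of products). The assumption $\mc{M}_2\in\SEP$ says that every POVM element $\ketbra{i}\x E^{(i)}_j$ of $\mc{M}_2$ decomposes as a sum of products across the $A/B$ bipartition; since $\ketbra{i}$ lives on the classical outcome register (and can be attached to either side), this forces each $E^{(i)}_j$ to be separable across $\bip{d_A}{d_B}$. Hence every $\mc{M}_2^{(i)}$ is separable.

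The substantive task is to show $\max_i \tilde{\mu}(E_i)=\ve$, where $\set{E_1,\dotsc,E_k}$ is the POVM of $\mc{M}_1$. From the product $\ve$-interpolation hypothesis we already have $\max_i \mu(E_i)=\ve$, and each $E_i=a_i\x b_i$ is a single product. The plan is to show that, for such product $E_i$, the minimization defining $\tilde{\mu}$ in \EqRef{ImuSEP} is attained by the trivial one-term decomposition, so $\tilde{\mu}(E_i)=\mu(E_i)$. The upper bound $\tilde{\mu}(E_i)\le\mu(E_i)$ follows by plugging $E_i=a_i\x b_i$ itself into the minimum. For the matching lower bound I would invoke quasiconvexity of $\mu$: for any product decomposition $E_i=\sum_j a_j'\x b_j'$ we have $\mu(E_i)=\mu\bigl(\sum_j a_j'\x b_j'\bigr)\le \max_j \mu(a_j'\x b_j')$, and taking the minimum over such decompositions gives $\mu(E_i)\le \tilde{\mu}(E_i)$.

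Putting the pieces together yields $\max_i \tilde{\mu}(E_i)=\max_i \mu(E_i)=\ve$, so $\mc{M}_2\circ\mc{M}_1$ satisfies both clauses of \Def{ISEPInterpol}. I do not foresee any genuine obstacle: the lemma reduces to the observation that $\tilde{\mu}$ and $\mu$ agree on product operators, which is exactly the content of quasiconvexity in \Def{IProgressFunction}, mirroring the argument already given in the paragraph following \Def{ISEPInterpol}.
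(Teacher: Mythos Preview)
Your proposal is correct and follows essentially the same route as the paper: both arguments observe that $\mc{M}_1$ is separable because its POVM elements are products, and both reduce the second bullet of \Def{ISEPInterpol} to the identity $\tilde{\mu}(c\otimes d)=\mu(c\otimes d)$, proved via quasiconvexity exactly as you describe. The only cosmetic difference is that the paper also records the easy consequence $\mc{M}=\mc{M}_2\circ\mc{M}_1\in\SEP$ (the precondition in \Def{ISEPInterpol}), while you spell out more explicitly why each $\mc{M}_2^{(i)}$ is separable; neither changes the substance.
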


\begin{proof}
According to \Def{IProdInterpol}, $\mc{M}_1$ has only product POVM elements and hence is separable. By assumption, $\mc{M}_2 \in \SEP$ and therefore $\mc{M} = \mc{M}_2 \circ \mc{M}_1 \in \SEP$.

Let the POVM for $\mc{M}_1$ be 
$\set{c_1\otimes d_1,\dotsc, c_k\otimes d_k}$. According to \Def{IInterpol}, $\max_i \mu(c_i\otimes d_i) = \ve$. To see that $\mc{M}_2\circ\mc{M}_1$ is also an $\ve$-interpolation of $\mc{M}$ in $\SEP$, it remains to show that $\max_i \tilde{\mu}(c_i\otimes d_i) = \ve$. By the quasiconvexity of $\mu$, for any $c\otimes d$ and any decomposition $\sum_j a_j\otimes b_j = c\otimes d$, we have $\mu (c \otimes d) \leq \max_j \mu(a_j \otimes b_j)$. Therefore $\mu(c_i\otimes d_i) = \tilde{\mu}(c_i\otimes d_i)$ and the lemma follows.
\end{proof}

\section{Main result} 
\label{sec:IMain}

In this section we prove our main result concerning
$\ve$-interpolation of von~Neumann measurements in $\SEP$. Recall
from \Lem{IEquivalence} the equivalence between a von~Neumann
measurement and the task of state discrimination for an orthonormal
basis $S$.
If $\bra{\psi} E \ket{\psi} = 0$ for some $\ket{\psi} \in S$ and some
POVM element $E$, the corresponding outcome eliminates $\ket{\psi}$.
To capture the intuition that significant progress is made in this
case, we focus on progress functions whose values for such $E$ cannot be  arbitrarily small.

\begin{definition}[Threshold]\label{def:IThreshold}
A progress function $\mu$ (see \Def{IProgressFunction}) has
\emph{threshold} $\mu_0 > 0$ with respect to an orthonormal basis $S$
if $\mu(E) \geq \mu_0$ for all nonzero $E \in \Pos(\C^n)$ such that $\bra{\psi} E \ket{\psi} = 0$ for some $\ket{\psi} \in S$.
\end{definition}

As a concrete example, consider the following progress function \cite{Childs2012}.

\begin{example*}
Let $S\subseteq \C^n$ be an orthonormal basis. Consider $\mu:\Pos(\C^n) \setminus \{0\} \to [0,\infty)$ given by
\begin{equation}
  \mu(E) :=
  \frac{ \max_{\ket{\psi} \in S} \bra{\psi} E \ket{\psi}}{\tr(E)}
   - \frac{1}{\abs{S}}.
  \label{eq:Imu}
\end{equation}
The first term in \Eq{Imu} is the maximum probability of making a
correct guess if the outcome corresponds to $E$, so $\mu$ measures the
deviation of the best guess from a uniformly random guess. 
It is easy to verify that $\mu$ satisfies the conditions of
\Def{IProgressFunction} and hence is a valid progress function.
If $\bra{\psi} E \ket{\psi} = 0$ for some $\ket{\psi} \in S$, then the
first term is at least $\frac{1}{n-1}$, so $\mu$ has threshold $\mu_0 =
\frac{1}{n(n-1)}$.
\end{example*}

The following lemma shows that if a separable von~Neumann measurement $\mathcal{M}$ in a basis $S$ can be $\ve$-interpolated for some small $\ve$, then there exists a nontrivial {\em local} measurement that is non-disturbing for $S$ (see \Def{INon-disturbing}). Intuitively this means that some part of the measurement $\mathcal{M}$ can be implemented by LOCC (we formalize this intuition later in \Thm{IMain}, our main result).

\begin{lemma}\label{lem:IInterpol}
Let $\mc{M} \in \SEP$ be a von~Neumann measurement in a basis $S \subseteq \bip{d_A}{d_B}$ and let $\mu$ be a progress function with threshold $\mu_0$ with respect to $S$ (see \Def{IThreshold}). If $\mc{M}$ can be $\ve$-interpolated in $\SEP$ for some $\ve \in (0, \mu_0)$, then there exists a projective measurement $\mathcal{L}$ of the form $\mc{A} \x \mc{I}$ or $\mc{I} \x \mc{B}$ that is non-disturbing for $S$ and achieves progress $\mu(E) \geq \mu_0$ for all $E\in\mathcal{L}$.
\end{lemma}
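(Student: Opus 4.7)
The plan is to apply \Lem{ISepToProd} to obtain a product $\ve$-interpolation in $\SEP$, use the rank-one form of $\mc{M}$'s POVM elements to show that each product POVM element $a_i \x b_i$ of $\mc{M}_1$ is diagonal in $S$ with strictly positive coefficients, and then extract a local projective measurement $\mc{L}$ from the spectral decomposition of a factor that is not proportional to the identity.

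First, applying \Lem{ISepToProd} yields a product $\ve$-interpolation $\mc{M}_2 \circ \mc{M}_1 = \mc{M}$ in $\SEP$ with $\mc{M}_1$ having POVM $\{a_i \x b_i\}_{i\in [k]}$ satisfying $\max_i \mu(a_i \x b_i) = \ve$. Writing $S = \{\ket{\psi_j}\}_j$, I would show $a_i \x b_i = \sum_j c_{i,j} \ketbra{\psi_j}$ with every $c_{i,j} > 0$. Diagonality in $S$ follows from \EqRef{IComp}: in each coarse-graining part summing to the rank-one $\ketbra{\psi_j}$, every fine-grained PSD element $(a_i \x b_i)^{1/2} E_j^{(i)} (a_i \x b_i)^{1/2}$ must be a non-negative multiple of $\ketbra{\psi_j}$, and summing over $j$ expresses $a_i \x b_i$ as a non-negative combination of the $\ketbra{\psi_j}$. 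Positivity of each $c_{i,j}$ follows because $\mu(a_i \x b_i) \leq \ve < \mu_0$ prevents $a_i \x b_i$ from annihilating any $\ket{\psi_j}$, by the threshold property.

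The heart of the argument is a rigidity observation: since separable von~Neumann measurements have product-state eigenbases, write $\ket{\psi_j} = \ket{\alpha_j} \x \ket{\beta_j}$; then the joint conditions that $a_i \x b_i$ is a product operator, diagonal in the product basis $S$, with all coefficients positive, force each Alice component $\ket{\alpha_j}$ to be an eigenvector of $a_i$ (and likewise $\ket{\beta_j}$ of $b_i$) with positive eigenvalue. The argument is to equate $(a_i \x b_i)\ket{\psi_j} = c_{i,j}\,\ket{\alpha_j} \x \ket{\beta_j}$ with $(a_i \x b_i)(\ket{\alpha_j} \x \ket{\beta_j}) = (a_i \ket{\alpha_j}) \x (b_i \ket{\beta_j})$ and invoke uniqueness of the factorisation of a nonzero product vector (the right-hand side is nonzero since $c_{i,j} > 0$). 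I expect this rigidity step to be the main obstacle, as it is what simultaneously uses the product structure of the POVM element and of the basis.

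To finish, I would pick $i^\star$ with $\mu(a_{i^\star} \x b_{i^\star}) = \ve > 0$, so $a_{i^\star} \x b_{i^\star} \not\propto \id$ and at least one factor is not proportional to $\id$; WLOG this is $a_{i^\star}$, with the other case yielding an $\mc{L}$ of the form $\mc{I} \x \mc{B}$. Since the $\ket{\alpha_j}$'s are eigenvectors of $a_{i^\star}$ with positive eigenvalues and span $\C^{d_A}$, the operator $a_{i^\star}$ is positive definite; let $a_{i^\star} = \sum_{p=1}^{r} \alpha_p P_p$ be its spectral decomposition, where $r \geq 2$ and each $P_p$ is a proper projector. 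Take $\mc{A} := \{P_1,\dotsc,P_r\}$ and $\mc{L} := \mc{A} \x \mc{I}$. Since each $\ket{\alpha_j}$ lies in exactly one eigenspace $\operatorname{range}(P_{p(j)})$ of $a_{i^\star}$, the operator $P_p \x \id$ maps $\ket{\psi_j}$ to $\delta_{p,p(j)}\ket{\psi_j}$, so $\mc{L}$ is non-disturbing for $S$. Finally, because the $\ket{\alpha_j}$'s span $\C^{d_A}$ and occupy at least $r \geq 2$ eigenspaces, for each $p$ there is a $j$ with $\ket{\alpha_j} \in \ker(P_p)$, so $(P_p \x \id)\ket{\psi_j} = 0$ and the threshold property yields $\mu(P_p \x \id) \geq \mu_0$.
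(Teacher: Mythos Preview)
Your proposal is correct and follows essentially the same route as the paper's proof: pass to a product $\ve$-interpolation via \Lem{ISepToProd}, show each $a_i \x b_i$ is diagonal in $S$ with strictly positive diagonal, use the product structure to deduce that the $\ket{\alpha_j}$ are eigenvectors of each $a_i$, and then take the spectral projective measurement of a non-identity factor at an index achieving $\mu = \ve$. The only noteworthy difference is in how you establish diagonality of $a_i \x b_i$ in $S$: the paper observes directly that $\mc{M}_1$ must be non-disturbing for $S$ (since $\mc{M}$ discriminates $S$ perfectly), whereas you argue via the composition formula \EqRef{IComp} that each fine-grained element is a non-negative multiple of some $\ketbra{\psi_m}$ and then sum over the second index using $\sum_j E_j^{(i)} = \id$; both arguments are short and equivalent in content.
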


\begin{proof}
Assume that $\mathcal{M}$ admits an $\ve$-interpolation in $\SEP$ for some $\ve \in (0,\mu_0)$. By \Lem{ISepToProd} it also admits a product $\ve$-interpolation (see \Def{IProdInterpol}). Let $\mathcal{M}_1$, with POVM $\set{E_i = a_i\otimes b_i}_i$, be the first measurement in some product $\ve$-interpolation of $\mc{M}$.
Since the measurement $\mathcal{M}$ perfectly discriminates the states from $S$, 
$\mathcal{M}_1$ must be non-disturbing, \emph{i.e.},
\begin{equation}
  \bra{\psi_j} E_i \ket{\psi_k} = 0
\end{equation}
for all $E_i$ and all distinct $j,k \in[d_A d_B]$. It follows that each $E_i$ is diagonal in the basis $S$. Thus, for each $i$ and $k$ there exists $\lambda_{ik} \geq 0$ such that
\begin{equation}
  E_i \ket{\psi_k} = \lambda_{ik} \ket{\psi_k}.
  \label{eq:IEigenvec}
\end{equation}
If any $\lambda_{ik}=0$, then $\bra{\psi_k} E_i \ket{\psi_k}=0$ and hence $\mu(E_i)\geq \mu_0$. Yet this contradicts the interpolation condition requiring that $\mu(E_i) \leq \ve < \mu_0$. Thus $\lambda_{ik} > 0$ for all $i,k$.

Now, using the fact that each $E_i = a_i \otimes b_i$ and each $\ket{\psi_k} = \ket{\alpha_k} \otimes \ket{\beta_k}$ for a product basis $S$, we rewrite \Eq{IEigenvec} as
\begin{equation}
  (a_i \otimes b_i) \ket{\alpha_k} \otimes \ket{\beta_k}
  = \lambda_{ik} \ket{\alpha_k} \otimes \ket{\beta_k}.
\end{equation}
Strict positivity of $\lambda_{ik}$ implies $b_i \ket{\beta_k} \neq 0$ and thus
\begin{equation}
  a_i \ket{\alpha_k} = \eta_{ik} \ket{\alpha_k},
\end{equation}
where $\eta_{ik} = \lambda_{ik} / \| b_i \ket{\beta_k} \|_2 > 0$.  Thus $(a_i \x \id_B) \ket{\psi_k} = a_i \ket{\alpha_k} \x \id_B \ket{\beta_k} = \eta_{ik} \ket{\psi_k}$, so
\begin{equation}
  \bra{\psi_j} (a_i \otimes \id_B) \ket{\psi_k} = 0
\end{equation}
for all distinct $j, k$.  Thus the matrix $a_i \otimes \id_B$ is diagonal in the basis $S$, and so is each $\Pi_{i,\eta} \otimes \id_B$, where $\Pi_{i,\eta}$ is the projector onto the eigenspace of $a_i$ with eigenvalue $\eta$.  Hence
\begin{equation}
  \bra{\psi_j} (\Pi_{i,\eta} \x \id_B) \ket{\psi_k} = 0
  \label{eq:IJoint}
\end{equation}
for all distinct $j,k$ and all $\eta \in \spec(a_i)$. 
If $\mc{A}_i$ is the projective measurement onto the eigenspaces of
$a_i$, with POVM $\set{\Pi_{i,\eta} : \eta \in \spec(a_i)}$, 
then the
joint measurement $\mc{A}_i \x \mc{I}$ is non-disturbing for $S$
according to \Eq{IJoint}. Note that unless $a_i = \id_A$, we have
$\bra{\psi} (\Pi_{i,\eta} \x \id_B) \ket{\psi} = 0$ for some
$\ket{\psi}\in S$. In this case $\mu(E) \geq \mu_0$ for all POVM 
elements $E$ 
of $\mc{A}_i \otimes \mc{\id}$. The same holds for $\mc{I} \x \mc{B}_i$
which can be defined similarly.

It remains to show that for some $i$ at least one of $\mc{A}_i \x \mc{I}$ and $\mc{I} \x \mc{B}_i$ is nontrivial. Consider an $i$ such that $\mu(E_i) = \ve$.  Since $\ve > 0$, $a_i \x b_i $ is not proportional to the identity matrix. Thus either $a_i$ is not proportional to the identity matrix and hence $\mc{A}_i \x \mc{I}$ is nontrivial, or $b_i$ is not proportional to the identity matrix and $\mc{I} \x \mc{B}_i$ is nontrivial. 
\end{proof}

Now we are ready to prove our main theorem, establishing interpolatability as an operational principle that distinguishes LOCC and separable von~Neumann measurements.

\begin{theorem}\label{thm:IMain}
Let $\mc{M} \in \SEP$ be a von~Neumann measurement in a basis $S \subseteq \bip{d_A}{d_B}$ and let $\mu$ be a progress function with threshold $\mu_0$ with respect to $S$ (see \Def{IThreshold}). Then $\mc{M}$ can be $\ve$-interpolated in $\SEP$ for some $\ve \in (0, \mu_0)$ if and only if $\mc{M} = \mc{M}_2 \circ \mc{M}_1$ for some $\mc{M}_2\in\SEP$ and some $\mc{M}_1 \in \LOCCN$ that achieves progress $\mu(E)\geq\mu_0$ for all POVM elements $E$ of $\mc{M}_1$.
\end{theorem}

\begin{proof}
($\Rightarrow$) Assume that $\mc{M}$ can be $\ve$-interpolated in $\SEP$ for some $\ve \in (0, \mu_0)$. Then by \Lem{IInterpol} there exists a local $k$-outcome measurement $\mc{A}$ on one of the parties, say Alice, such that $\mc{A} \x \mc{I}$ is non-disturbing for $S$ and achieves progress $\mu(E) \geq \mu_0$ for all POVM elements $E$ of $\mc{A} \x \mc{I}$. Choose $\mc{M}_1 = \mc{A} \x \mc{I}$ and $\mc{M}_2 = \bigoplus_{i \in [k]} \mc{M}$. Since $\mc{A} \x \mc{I}$ is non-disturbing for $S$, coarse graining according to the outcomes of $\mc{M}_2$ implements the original measurement $\mc{M}$ in the basis $S$. Hence $\mc{M} = \mc{M}_2 \circ \mc{M}_1$ where $\mc{M}_2 \in \SEP$ and $\mc{M}_1 \in \LOCCN$.

($\Leftarrow$) Assume that $\mc{M} = \mc{M}_2 \circ \mc{M}_1$ for some
$\mc{M}_2\in\SEP$ and some $\mc{M}_1 \in \LOCCN$ that achieves
progress $\mu(E) \geq \mu_0$ for all $E$ in the POVM of $\mc{M}_1$.
To obtain the desired $\ve$-interpolation of $\mc{M}$, we locate the
earliest measurement in an LOCC implementation for
$\mc{M}_1$ that achieves nonzero progress.  By $\ve$-interpolating
this local measurement we obtain an $\ve$-interpolation of $\mc{M}_1$
and hence of $\mc{M}$. We now formalize this idea.

Consider an LOCC protocol for implementing $\mathcal{M}_1$. We can naturally represent this protocol as a rooted tree $\mc{T}$, where the nodes in each level correspond to measurements  performed in the corresponding round of the protocol (see Section~2.2.4 of \cite{Childs2012} for more explanation). We define a subtree $\mc{T}'$ of $\mc{T}$ recursively as follows (see \Fig{ITree} for an example). First, we include the root of $\mc{T}$ in $\mc{T}'$. Next, if a vertex $v$ is in $\mc{T}'$ and all children of $v$ have zero progress, then we include the children of $v$ in $\mc{T}'$ as well. 
We obtain the desired $\ve$-interpolation of $\mc{M}$ by interpolating the measurement at some leaf $v'$ of $\mc{T}'$.

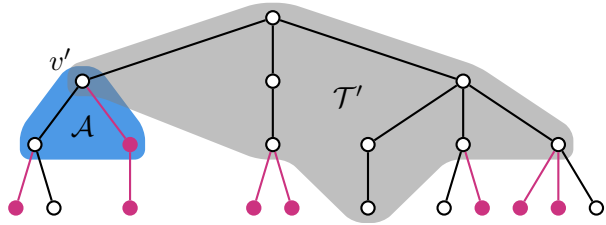
\begin{figure}[!ht]
\centering

\pgfdeclarelayer{background}
\pgfsetlayers{background,main}

\definecolor{Blk}{rgb}{0.0,0.0,0.0}
\definecolor{Red}{rgb}{0.8,0.2,0.5}
\definecolor{Blu}{rgb}{0.3,0.6,0.9}

\begin{tikzpicture}[
  thick,
  circ/.style = {circle, draw = Blk, fill = white, inner sep = 0mm, minimum size = 5pt},
  dot/.style = {circle, draw = Red, fill = Red, inner sep = 0mm, minimum size = 5pt},
  de/.style = {Red},
  thn/.style = {thick},
  > = latex']

\def\dx{18}
\def\dy{24}
\tikzstyle{level 1} = [level distance = \dy, sibling distance = 4*\dx]
\tikzstyle{level 2} = [level distance = \dy, sibling distance = 2*\dx]
\tikzstyle{level 3} = [level distance = \dy, sibling distance = 0.8*\dx]

\path [Blk]
node (r) at (5,0) [circ] {}
child {
  node (a) [circ] {}
  child [thn] {
    node (L) [circ] {}
    child [de] {node [dot] {}}
    child {node [circ] {}}
  }
  child [de,thn] {
    node (R) [dot] {}
    child [de] {node [dot] {}}
  }
}
child {
  node (T') [circ] {}
  child {
    node (b) [circ] {}
    child [de,thn] {node [dot] {}}
    child [de,thn] {node [dot] {}}
  }
}
child {
  node (f) [circ] {}
  child {
    node [circ] {}
    child {node (c) [circ] {}}
  }
  child {
    node (d) [circ] {}
    child [thn] {node [circ] {}}
    child [de,thn] {node [dot] {}}
  }
  child {
    node (e) [circ] {}
    child [de,thn] {node [dot] {}}
    child [de,thn] {node [dot] {}}
    child [thn] {node [circ] {}}
  }
};

\def\wx{0.2}
\def\wy{0.2}

\foreach \i in {r,a,b,c,d,e,f,L,R} {
  \foreach \sx in {+,-} {
    \foreach \sy in {+,-} {
      \path (\i)+(\sx\wx,\sy\wy) coordinate (\i\sx\sy);
    }
  }
}

\begin{pgfonlayer}{background}
  \pgfsetcornersarced{\pgfpoint{4pt}{4pt}}
  \fill [fill = Blu, thin]
    (a-+) --
    (L-+) -- (L--) --
    (R+-) -- (R++) --
    (a++) -- cycle;
  \fill [fill = Blk!50, opacity = 0.5]
    (r-+) -- (a-+) --
    (a--) -- (a+-) --
    (b--) -- (b+-) --
    (c--) -- (c+-) --
    (d--) -- (d+-) --
    (e--) -- (e+-) --
    (e++) -- (f++) --
    (r++) -- cycle;
\end{pgfonlayer}

\node [above left] at (a) {$v'$};
\path (T')+(5*\wx,-\wy) node {$\mc{T}'$};
\path (a)+(0,-3*\wy) node {$\mc{A}$};

\end{tikzpicture}
\caption{An example of a protocol tree $\mc{T}$ and its corresponding subtree $\mc{T}'$ (gray region).
We use black edges and empty nodes to indicate that zero progress is made at that point of the protocol. Purple edges and solid nodes indicate nonzero progress. Since the marked node $v'$ has a child with nonzero progress, we can $\ve$-interpolate the local measurement $\mc{A}$ at $v'$ (blue region) for some nonzero $\ve$.}
\label{fig:ITree}
\end{figure}

We claim that $\mu$ must be nonzero at some leaf of $\mc{T}$. This holds because $\mu(E) \ge \mu_0$ for all POVM elements $E$ of $\mc{M}_1$, $E$ is obtained by coarse graining measurement operators corresponding to the leaves, and $\mu$ is quasiconvex. By construction, $\mc{T}'$ has some vertex $v'$ with a child outside of $\mc{T}'$ with nonzero progress. 
Assume without loss of generality that Alice is the party performing a local measurement at $v'$ and denote that measurement by $\mc{A}$. In analogy to \Eq{IComp}, define a function $\mu'$ on Alice's space via
\begin{equation}
  \mu'(a) := \mu \bigl( \bigl( \sqrt{a'} a \sqrt{a'} \bigr) \x b' \bigr),
\end{equation}
where $a'\otimes b'$ is the POVM element that has been applied upon reaching node $v'$. Note that $\mu'$ is a valid progress function as it inherits all the properties required in \Def{IProgressFunction} from $\mu$ (e.g., $\mu'(I_A) = \mu(a' \x b') = 0$ by construction). Let $\lambda := \max_{a \in \mc{A}} \mu'(a)$ and note that $\lambda>0$ according to our assumption that $v'$ has children with nonzero progress.

Now, using \Thm{IKKB}, we can $\ve$-interpolate $\mc{A}$ with respect to $\mu'$ for any $\ve \in (0, \min\set{\lambda,\mu_0})\subseteq [0,\lambda]$. Any such $\ve$-interpolation of $\mc{A}$ with respect to $\mu'$ gives a product $\ve$-interpolation of $\mc{M}_1$ with respect to $\mu$, where the second-stage measurements complete the original LOCC protocol described by $\mc{T}$.
Since $\mathcal{M} = \mc{M}_2 \circ \mc{M}_1$, any product $\ve$-interpolation of $\mc{M}_1$ also gives a product $\ve$-interpolation of $\mc{M}$. Finally, applying \Lem{IProdToSep} yields an $\ve$-interpolation of $\mc{M}$ in~$\SEP$.
\end{proof}

To describe the consequences of \Thm{IMain}, let us first consider an example.

\begin{example*}
Let $\mc{M}$ be the von~Neumann measurement corresponding to the product basis shown in~\Fig{IDominoes}. Let $\mc{M}_{\LOCC} \in \LOCCN$ be a measurement implemented by the following two-step protocol (intuitively, it ``peels off'' the two extra tiles):
\begin{enumerate}
  \item Alice performs a two-outcome measurement $\set{I - \ketbra{3}, \ketbra{3}}$ and sends the outcome to Bob.
  \item If Alice got the first outcome, Bob applies the same measurement; otherwise he does nothing.
\end{enumerate}
\end{example*}

\begin{figure}[!ht]
\centering

\def\step{20pt} 

\begin{tikzpicture}[
  domino/.style = {rectangle, rounded corners = 0.2*\step, draw = black!95, fill = black!20},
  gridlines/.style = {gray, semithick}
]

  \newcommand{\tile}[4]{
    \pgfmathparse{#1+0+0.5*#3};
    \let\x = \pgfmathresult;
    \pgfmathparse{#2+0+0.5*#4};
    \let\y = \pgfmathresult;
    \drawtile{\x}{\y}{#3}{#4}
  }

  \newcommand{\drawtile}[4]{
    \node[semithick, domino,
      minimum width  = #3*\step - 0.2*\step,
      minimum height = #4*\step - 0.2*\step] at (#1*\step, #2*\step) {};
  }

  \newcommand{\tiling}[3]{
    \draw[step = \step, gridlines] (0,0) grid (#1*\step, #2*\step);
    #3
    \draw[gridlines] (0,0) -- (#1*\step, 0) -- (#1*\step, #2*\step) -- (0, #2*\step) -- cycle;
    \foreach \i in {0,...,3}{
      \node at (\i*\step+0.5*\step, 3.2)   {$|\i\rangle$};
      \node at (-0.35,-\i*\step+3.5*\step) {$|\i\rangle$};
    };
    \node at (-1.3,2*\step) {Alice};
    \node at (2*\step,5.4*\step) {Bob};
  }

  \tiling{4}{4}{
    \begin{scope}[domino/.style = {rectangle, rounded corners = 0.2*\step,
                                   draw = black!95, fill = black!50}]
      \tile{1}{2}{1}{1}
      \tile{0}{1}{1}{2}
      \tile{2}{2}{1}{2}
      \tile{0}{3}{2}{1}
      \tile{1}{1}{2}{1}
    \end{scope}
    \draw[gridlines, dashed] (-2*\step,\step) -- (6*\step,\step);
    \draw[gridlines, dashed] (3*\step,6*\step) -- (3*\step,-\step);
    \tile{3}{1}{1}{3}
    \tile{0}{0}{4}{1}
  }

  \node at (1.5*\step,2.5*\step) {\scriptsize$1$};
  \node at (1.0*\step,3.5*\step) {\scriptsize$2$};
  \node at (2.0*\step,1.5*\step) {\scriptsize$3$};
  \node at (0.5*\step,2.0*\step) {\scriptsize$4$};
  \node at (2.5*\step,3.0*\step) {\scriptsize$5$};
  \node at (3.5*\step,2.5*\step) {\scriptsize$6$};
  \node at (2.0*\step,0.5*\step) {\scriptsize$7$};

  \node at (12*\step,2.5*\step) {
    $\begin{aligned}
      \ket{\psi_1} &= \ket{1} \ket{1} \\
      \ket{\psi_2^\pm} &= \ket{0} \ket{0 \pm 1} \\
      \ket{\psi_3^\pm} &= \ket{2} \ket{1 \pm 2} \\
      \ket{\psi_4^\pm} &= \ket{1 \pm 2} \ket{0} \\
      \ket{\psi_5^\pm} &= \ket{0 \pm 1} \ket{2} \\
      \ket{\psi_6^i} &= (U_3 \ket{i}) \ket{3} & i &\in \{0,1,2\} \\
      \ket{\psi_7^j} &= \ket{3} (U_4 \ket{j}) & j &\in \{0,1,2,3\}
    \end{aligned}$
  };

\end{tikzpicture}
\caption{A product basis corresponding to domino states (dark gray) augmented with two extra tiles (light gray). A tile of size $l$ represents $l$ states that are supported only on that tile (see~\cite{Childs2012} for more details). All $16$ states are listed on the right, where $\ket{x \pm y} := (\ket{x}\pm\ket{y})/\sqrt2$. The light gray tiles are generated by unitaries $U_3$ and $U_4$ of size $3 \times 3$ and $4 \times 4$, respectively, that have no zero entries in the computational basis. For concreteness, $U_n$ could be the quantum Fourier transform modulo $n$.}
\label{fig:IDominoes}
\end{figure}

Note that $\mc{M}_{\LOCC}$ in this example is non-disturbing, so it can be completed by some measurement $\mc{M}' \in \SEP$ to obtain a decomposition $\mc{M} = \mc{M}' \circ \mc{M}_{\LOCC}$ as in \Thm{IMain}. We can specify $\mc{M}'$ more precisely by describing the measurement associated to each outcome of $\mc{M}_{\LOCC}$. If either of the parties obtains $\ketbra{3}$, they are left with one of the two long tiles and the protocol can be easily completed by a local measurement in an appropriate basis. Otherwise they are left with the problem of discriminating the domino states. Then no nontrivial non-disturbing local measurement is possible~\cite{Groisman, Walgate, Cohen07}, so Alice and Bob cannot proceed any further by using only LOCC. We call the remaining measurement \emph{purely separable} since it can be completed using separable operations, but no further progress can be made by LOCC without ruining the orthogonality of the states.

\section{Discussion} 
\label{sec:IDiscussion}

It is known that all LOCC measurements are separable but that some separable measurements are not in LOCC \cite{IBM, Childs2012}. Nevertheless, some separable measurements can be \emph{partially} implemented by LOCC. Purely separable measurements cannot even be partially implemented by LOCC. The resulting hierarchy is shown in \Fig{IDiagram}.

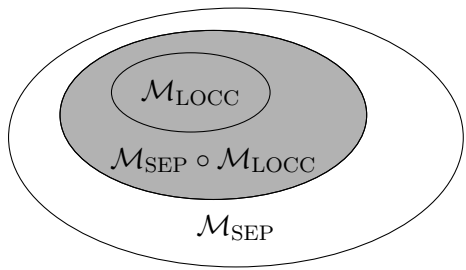
\begin{figure}[!ht]
\centering

\begin{tikzpicture}
\draw[fill=black!30] (0.3,-0.3) ellipse (58 pt and 32 pt);

\foreach \x/\name in {
    0/$\mathcal{M}_{\mathrm{LOCC}}$,
    1/$\mathcal{M}_{\mathrm{SEP}}\circ\mathcal{M}_{\mathrm{LOCC}}$,
    2/$\mathcal{M}_{\mathrm{SEP}}$}{
  \draw (\x*0.3,-\x*0.3) ellipse (30+28*\x pt and 15+17*\x pt);
  \draw (\x*0.3,-\x*0.9) node {\name};
}
\end{tikzpicture}
\caption{Subclasses of separable von~Neumann measurements. The innermost region corresponds to LOCC measurements. The shaded region corresponds to measurements that can be partially implemented by LOCC, \emph{i.e.}, decomposed as $\mc{M}_{\SEP} \circ \mc{M}_{\LOCC}$ as in \Thm{IMain}. The white region corresponds to purely separable measurements.}
\label{fig:IDiagram}
\end{figure}

Our main result (\Thm{IMain}) characterizes purely separable measurements as precisely those for which $\ve$-interpolation is not possible for any positive $\ve$. We conclude that $\ve$-interpolatability for small $\ve > 0$ is the key feature that distinguishes $\LOCCN$ from purely separable von~Neumann measurements. In fact, this observation can be boosted to $\overline\LOCC$ (the closure of $\LOCC$), where $\mathcal{M}\in\overline{\LOCC}$ if there exists a sequence of measurements $\mc{M}_i \in \LOCCN$ that converge to $\mathcal{M}$ (see \cite{Eric12} for more details). This follows by combining \Thm{IMain} with the result of~\cite{KKB} that if $\mc{M}$ is a von~Neumann measurement, then $\mc{M} \in \LOCCN$ if and only if $\mc{M} \in \overline\LOCC$.

Our results suggest several open problems. One possible research direction is to extend our results beyond von~Neumann measurements. For example, can one generalize the notion of a progress function and prove an analogue of \Thm{IMain} for general POVMs or for the task of discriminating orthonormal states from an incomplete product basis?

Taking the idea of interpolation further, it could also be fruitful to find a continuous-time description of LOCC protocols. Such a description might give a new perspective on LOCC and a new tool for analyzing it. In particular, is it possible that the optimal protocol for some task is intrinsically continuous-time?

\section{Acknowledgements}

We thank David Roberson, Graeme Smith, and Andreas Winter for useful discussions, as well as the organizers of the workshop ``Operator structures in quantum information theory'' at the Banff International Research Station in 2012. This work was supported in part by NSERC, CRC, CFI, ORF, CIFAR, the Ontario Ministry of Research and Innovation, and the US ARO/DTO. MO acknowledges additional support from the DARPA QUEST program under contract number HR0011-09-C-0047.


\bibliographystyle{alphaurl}
\bibliography{Nonlocality}
\end{document}